\documentclass[letterpaper, 10 pt, conference]{ieeeconf}
\IEEEoverridecommandlockouts
\overrideIEEEmargins
\usepackage{cite}
\usepackage{amsmath,amssymb,amsfonts}
\usepackage{algorithmic}
\usepackage{graphicx}
\usepackage{textcomp}
\usepackage{xcolor}
\usepackage{mathrsfs}
\usepackage{psfrag}
\begin{document}

\title{\LARGE \bf Robust Output Feedback Consensus for Networked Heterogeneous Nonlinear Negative-Imaginary Systems\\
}

\author{Kanghong Shi$^{\dagger}$,\qquad Igor G. Vladimirov$^{\dagger}$,\qquad Ian R. Petersen$^{\dagger}$
\thanks{This work was supported by the Australian Research Council under grant DP190102158.}
\thanks{$^{\dagger}$Research School of Electrical, Energy and Materials Engineering, College of Engineering and Computer Science, Australian National University, Canberra, Acton, ACT 2601, Australia.
        {\tt kanghong.shi@anu.edu.au}, {\tt igor.vladimirov@anu.edu.au}, {\tt ian.petersen@anu.edu.au}.}%
}

\newtheorem{definition}{Definition}
\newtheorem{theorem}{Theorem}
\newtheorem{conjecture}{Conjecture}
\newtheorem{lemma}{Lemma}
\newtheorem{remark}{Remark}
\newtheorem{corollary}{Corollary}
\newtheorem{assumption}{Assumption}

\maketitle
\thispagestyle{empty}
\pagestyle{empty}

\begin{abstract}
This paper provides a control protocol for the robust output feedback consensus of networked heterogeneous nonlinear negative-imaginary (NI) systems. Heterogeneous nonlinear output strictly negative-imaginary (OSNI) controllers are applied in positive feedback according to the network topology to achieve output feedback consensus. The main contribution of this paper is extending the previous studies of the robust output feedback consensus problem for networked heterogeneous linear NI systems to nonlinear NI systems. Output feedback consensus is proved by investigating the internal stability of the closed-loop interconnection of the network of heterogeneous nonlinear NI plants and the network of heterogeneous nonlinear OSNI controllers according to the network topology. The network of heterogeneous nonlinear NI systems is proved to be also a nonlinear NI system, and the network of heterogeneous nonlinear OSNI systems is proved to be also a nonlinear OSNI system. Under suitable conditions, the nonlinear OSNI controllers lead to the convergence of the outputs of all nonlinear NI plants to a common limit trajectory, regardless of the system model of each plant. Hence, the protocol is robust with respect to parameter perturbation in the system models of the heterogeneous nonlinear NI plants in the network.
\end{abstract}

\begin{keywords}
nonlinear negative-imaginary systems, nonlinear output strictly negative-imaginary systems, heterogeneous systems, consensus, robust control.
\end{keywords}

\section{Introduction}
Negative-imaginary (NI) systems theory was introduced by Lanzon and Petersen in \cite{lanzon2008stability,petersen2010feedback} in order to explain the robustness properties in a class of controllers for flexible structures \cite{fanson1990positive}. NI systems theory has attracted a lot of interest among control theory researchers (see \cite{xiong2010negative,mabrok2014generalizing,xiong2009lossless,song2012negative,mabrok2011spectral}, etc.). NI systems theory complements positive-real (PR) systems theory because it can be applied to systems with a relative degree from zero to two, while PR systems theory can only deal with systems with relative degree of zero or one. Typical NI systems are mechanical systems with co-located force actuators and position sensors. Positive-position feedback control is often used for NI systems, which can be applied to flexible structures with highly resonant dynamics due to the robustness of NI systems with respect to uncertainty in system models and external disturbances. NI systems theory has already achieved success in some fields, such as nano-positioning (see \cite{das2014resonant,das2014mimo,das2015multivariable}, etc.).

NI systems theory was recently extended to nonlinear systems \cite{ghallab2018extending}. A system is said to be nonlinear NI if it is dissipative with the supply rate $w=u^T\dot y$, where $u$ and $y$ are the input and output of the system, respectively, and $y$ only depends on the system state $x$. While the positive-feedback interconnection of a linear NI system and a linear strictly negative-imaginary (SNI) system is asymptotically stable if their cascaded DC gain is less than unity, the positive-feedback interconnection of a nonlinear NI system and a so-called weak strictly nonlinear NI system is proved in \cite{ghallab2018extending} to be also asymptotically stable under reasonable assumptions.

A class of linear NI systems called output strictly negative-imaginary (OSNI) systems was introduced in \cite{bhowmick2017lti} and \cite{bhowmickoutput} for linear systems and was recently extended to nonlinear systems in \cite{shi2020robust}. A system is said to be nonlinear OSNI if it is dissipative with the supply rate $w(u,\dot y)=u^T\dot y-\epsilon \left|\dot y\right|^2$, where $u$, $x$ and $y$ are the input, state and output of the system, respectively. Also, $y$ is only dependent on $x$. Here, $\epsilon>0$ quantifies the level of output strictness of the system. It is proved in \cite{shi2020robust} that the closed-loop interconnection of a nonlinear NI system and a nonlinear OSNI system is asymptotically stable under certain conditions.

Consensus problems have been widely studied by control theory researchers (see \cite{kim2010output,xi2012output,hu2016output,li2015output}, etc). 
NI systems theory was used in \cite{wang2015robust} to address an output feedback consensus problem for systems with colocated force actuators and position sensors. In \cite{wang2015robust}, a network of systems is said to have output feedback consensus if the outputs of all subsystems converge to a common limit trajectory under the effect of the network communication between subsystems. With certain conditions satisfied, the outputs of heterogeneous linear NI systems connected according to an undirected connected graph can converge to the same limit trajectory if edge-based linear SNI controllers are connected to the plants in positive feedback according to the network topology. The results in \cite{wang2015robust} has been used to address cooperative control problems for multiple UAVs and robots (see \cite{tran2017formation,qi2016cooperative}). However, the result of \cite{wang2015robust} is only applicable for linear NI systems. Motivated by the nonlinear NI systems theory, this paper extends the results in \cite{wang2015robust} to nonlinear NI systems by using the results in \cite{ghallab2018extending} and \cite{shi2020robust}.

The main contribution of this paper is providing a control framework to synchronise multiple heterogeneous nonlinear NI systems under certain conditions. Output feedback consensus of networked heterogeneous nonlinear NI systems is proved by analysing the stability of the closed-loop interconnection of a network of heterogeneous nonlinear NI plants and a network of heterogeneous nonlinear OSNI controllers. Output consensus is guaranteed as long as the nonlinear NI property of the networked systems is preserved and suitable conditions are satisfied. An example is provided to demonstrate the effectiveness of the protocol in dealing with networked heterogeneous nonlinear NI systems.

Notation: The notation in this paper is standard. $\mathbb R$ denotes the field of real numbers. $\mathbb R^{m\times n}$ denotes the space of real matrices of dimension $m\times n$. $A^T$ denotes the transpose of matrix $A$. $\overline{u}$ denotes a constant vector or scalar. $I_n$ is the $n\times n$ identity matrix. $A\otimes B$ denotes the Kronecker product of matrices $A$ and $B$. For a nonlinear system $H$ with input $u$ and output $y$, $y=H(u)$ describes its input-output relationship. 

Graph theory preliminaries: $\mathcal G=(\mathcal V,\mathcal E)$, where $\mathcal V=\{v_1,v_2,\cdots,v_N\}$ and $\mathcal E=\{e_1,e_2,\cdots,e_l\} \subseteq \mathcal V\times \mathcal V$, describes an undirected graph with $n$ nodes and $l$ edges. The symmetric adjacency matrix $\mathcal A = [a_{ij}]\in \mathbb R^{N\times N}$ is defined so that $a_{ii}=0$, and $\forall i\neq j$, $a_{ij}=1$ if $(v_i,v_j)\in \mathcal E$ and $a_{ij}=0$ otherwise. A sequence of unrepeated edges in $\mathcal E$ that joins a sequence of nodes in $\mathcal V$ defines a path. An undirected graph is connected if there is a path between every pair of nodes. Given an undirected graph $\mathcal G$, a corresponding directed graph can be obtained by defining a direction for each edge of $\mathcal G$. The incidence matrix $\mathcal Q=[q_{ev}]\in \mathbb{R}^{l\times N}$ of a directed graph is defined so that the elements in $\mathcal Q$ are given by
\begin{equation*}
    q_{ev}:=
    \begin{cases}
    1 & \text{if } v \text{ is the initial vertex of edge } e,\\
    -1 & \text{if } v \text{ is the terminal vertex of edge } e,\\
    0 & \text{if } v \text{ does not belong to edge } e.
    \end{cases}
\end{equation*}
In this paper, the initial and terminal vertices of an edge in a directed graph can both send information to each other. For an undirected graph $\mathcal G$, the choice of a corresponding directed graph is not unique. However, the Laplacian matrix $\mathcal L_N$ of $\mathcal G$ has the following relationship with the incidence matrix $\mathcal Q$ of any directed graph corresponding to $\mathcal G$: $\mathcal{L}_N=\mathcal{Q}^T\mathcal{Q}$. 
\section{Preliminaries}
Here, we recall the definitions of nonlinear negative-imaginary systems and nonlinear output strictly negative-imaginary systems.

Consider the following general nonlinear system:
\begin{align}
    \dot x(t)=&\ f(x(t),u(t));\label{eq:state equation of nonlinear OSNI}\\
    y(t)=&\ h(x(t))\label{eq:output equation of nonlinear OSNI}
\end{align}
where $f:\mathbb R^n\times \mathbb R^m \to \mathbb R^n$ is a Lipschitz continuous function and $h:\mathbb R^n \to \mathbb R^m$ is a class $C^1$ function.
\begin{definition}\cite{shi2020robust}\label{def:nonlinear NI}
The system (\ref{eq:state equation of nonlinear OSNI}), (\ref{eq:output equation of nonlinear OSNI}) is said to be a nonlinear negative-imaginary system if there exists a positive definite storage function $V:\mathbb R^n\to \mathbb R$ of class $C^1$ such that
\begin{equation}\label{eq:NI MIMO definition inequality}
    \dot V(x(t))\leq u(t)^T\dot y(t),\qquad \forall t\geq 0.
\end{equation}
\end{definition}

\begin{definition}\cite{shi2020robust}\label{def:nonlinear OSNI}
The system (\ref{eq:state equation of nonlinear OSNI}), (\ref{eq:output equation of nonlinear OSNI}) is said to be a nonlinear output strictly negative-imaginary system if there exists a positive definite storage function $V:\mathbb R^n\to\mathbb R$ of class $C^1$ and a constant $\epsilon>0$ such that
\begin{equation}\label{eq:dissipativity of OSNI}
    \dot V(x(t))\leq u(t)^T\dot y(t) -\epsilon \left|\dot y(t)\right|^2, \qquad \forall t\geq 0.
\end{equation}
The quantity $\epsilon$ describes the level of output strictness of the system.
\end{definition}
\section{Robust output feedback consensus}
\begin{figure}[h!]
\centering
\psfrag{H_0}{$\mathcal{H}_p$}
\psfrag{u_p1}{$u_{p1}$}
\psfrag{y_p1}{$y_{p1}$}
\psfrag{u_p2}{$u_{p2}$}
\psfrag{y_p2}{$y_{p2}$}
\psfrag{u_pn}{$u_{pN}$}
\psfrag{y_pn}{$y_{pN}$}
\psfrag{H_p1}{\hspace{-0.05cm}$H_{p1}$}
\psfrag{H_p2}{\hspace{-0.05cm}$H_{p2}$}
\psfrag{H_pn}{\hspace{-0.05cm}$H_{pN}$}
\psfrag{ddd}{\hspace{0.13cm}$\vdots$}
\psfrag{udd}{\hspace{0.15cm}$\vdots$}
\psfrag{odd}{\hspace{0.1cm}$\vdots$}
\hspace{0.5cm}\vspace{-0.3cm}\includegraphics[width=7.5cm]{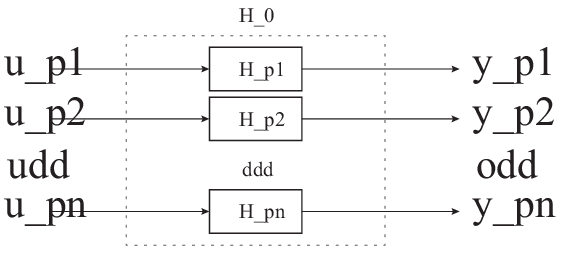}
\caption{System $\mathcal{H}_p$: a nonlinear system consisting of $N$ independent and heterogeneous nonlinear systems $H_{pi}$ $(i=1,2,\cdots,N)$, with independent inputs and outputs combined as the input and output of the networked system $\mathcal{H}_p$.}
\label{fig:H_p1_hetero}
\end{figure}
Consider $N$ heterogeneous nonlinear systems $H_{pi}$ $(i=1,2,\cdots,N)$ described as
\begin{align}
    \dot x_{pi}(t)=&\ f_{pi}(x_{pi}(t),u_{pi}(t));\label{eq:state pi}\\
    y_{pi}(t)=&\ h_{pi}(x_{pi}(t))\label{eq:output pi}
\end{align}
where $f_{pi}:\mathbb R^n\times \mathbb R^m \to \mathbb R^n$ is a Lipschitz continuous function and $h_{pi}:\mathbb R^n \to \mathbb R^m$ is a class $C^1$ function. They operate independently in parallel and each of them has its own input $u_{pi}\in \mathbb{R}^{m}$ and output $y_{pi}\in\mathbb{R}^{m}$, ($i=1,2,\cdots,N$), which is shown in Fig.~\ref{fig:H_p1_hetero}. The subscript ``$p$" indicates that this system will play the role of a plant in what follows. We combine the inputs and outputs respectively as the vectors $U_p=\left[u_{p1}^T, u_{p2}^T, \cdots,  u_{pN}^T\right]^T\in \mathbb R^{Nm\times 1}$, and $Y_p=\left[y_{p1}^T, y_{p2}^T, \cdots, y_{pN}^T\right]^T\in \mathbb R^{Nm\times 1}$.

Let us consider the networked plants connected according to the graph network topology $\hat{\mathcal{H}}_{p}$ as shown in Fig.~\ref{fig:networked_plants}, where $\mathcal{Q}$ is the incidence matrix of a directed graph that represents the communication links between the heterogeneous nonlinear NI plants.
\begin{figure}[h!]
\centering
\psfrag{H_p}{$\mathcal{H}_p$}
\psfrag{H_p1}{$H_{p1}$}
\psfrag{H_p2}{$H_{p2}$}
\psfrag{H_pl}{$H_{pN}$}
\psfrag{U_p}{\hspace{-0.05cm}$U_{p}$}
\psfrag{Y_p}{$Y_{p}$}
\psfrag{ddots}{$\ddots$}
\psfrag{H_np}{$\hat{\mathcal{H}}_{p}$}
\psfrag{U_np}{\hspace{0.2cm}$\hat{U}_{p}$}
\psfrag{Y_np}{\hspace{0.2cm}$\hat{Y}_{p}$}
\psfrag{Q_t}{$\mathcal{Q}\otimes I_m$}
\psfrag{Q}{\hspace{-0.2cm}$\mathcal{Q}^T\otimes I_m$}
\vspace{-0.2cm}\includegraphics[width=8.5cm]{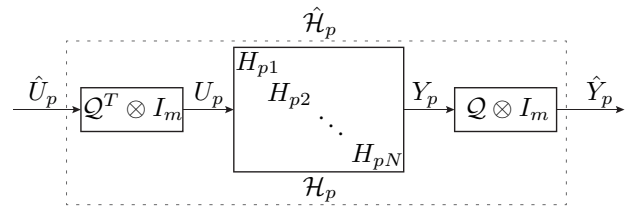}
\caption{Heterogeneous nonlinear NI plants connected according to the directed graph network topology.}
\label{fig:networked_plants}
\end{figure}

For the system $\hat{\mathcal H}_p$ in Fig.~\ref{fig:networked_plants}, we have the following lemma:
\begin{lemma}
If the systems $H_{pi}$ are nonlinear NI systems for all $i=1,2,\cdots,N$, then the system $\hat{\mathcal H}_p$ is also a nonlinear NI system.
\end{lemma}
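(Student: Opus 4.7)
The plan is to prove $\hat{\mathcal{H}}_p$ is a nonlinear NI system directly from Definition~\ref{def:nonlinear NI} by constructing a storage function for the networked system out of the storage functions of the individual plants, and then using the incidence matrix relationship between the external signals $(\hat U_p,\hat Y_p)$ and the internal signals $(U_p,Y_p)$.

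First, I would read off from Fig.~\ref{fig:networked_plants} the interconnection equations $U_p=(\mathcal Q^T\otimes I_m)\hat U_p$ and $\hat Y_p=(\mathcal Q\otimes I_m)Y_p$, noting that the combined state is $X_p=[x_{p1}^T,\ldots,x_{pN}^T]^T$ and the output $\hat Y_p$ is a function of $X_p$ alone (since each $y_{pi}=h_{pi}(x_{pi})$). Since each $H_{pi}$ is nonlinear NI, by Definition~\ref{def:nonlinear NI} there exists a positive definite $C^1$ storage function $V_{pi}$ satisfying $\dot V_{pi}(x_{pi})\leq u_{pi}^T\dot y_{pi}$. I would then define the candidate storage function for $\hat{\mathcal H}_p$ by
\begin{equation*}
\hat V(X_p):=\sum_{i=1}^N V_{pi}(x_{pi}),
\end{equation*}
which is positive definite and of class $C^1$ on $\mathbb R^{Nn}$.

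Next, summing the individual dissipation inequalities gives
\begin{equation*}
\dot{\hat V}=\sum_{i=1}^N \dot V_{pi}\leq \sum_{i=1}^N u_{pi}^T\dot y_{pi}=U_p^T\dot Y_p.
\end{equation*}
The final step is the Kronecker-product manipulation: using $U_p=(\mathcal Q^T\otimes I_m)\hat U_p$ and the fact that $(\mathcal Q\otimes I_m)$ is a constant matrix (so it commutes with time differentiation),
\begin{equation*}
U_p^T\dot Y_p=\hat U_p^T(\mathcal Q\otimes I_m)\dot Y_p=\hat U_p^T\frac{d}{dt}\bigl[(\mathcal Q\otimes I_m)Y_p\bigr]=\hat U_p^T\dot{\hat Y}_p,
\end{equation*}
yielding $\dot{\hat V}\leq \hat U_p^T\dot{\hat Y}_p$, which is exactly (\ref{eq:NI MIMO definition inequality}) for $\hat{\mathcal H}_p$.

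There is no real obstacle in this argument; the only thing to be careful about is the bookkeeping of Kronecker products and ensuring that the output of the aggregated system really depends only on the aggregated state (which follows from $h_{pi}$ depending only on $x_{pi}$ and $\mathcal Q\otimes I_m$ being a constant matrix). The hardest conceptual point, if any, is just recognizing that the incidence-matrix interconnection preserves the NI supply rate because $\mathcal Q^T$ appearing on the input side and $\mathcal Q$ appearing on the output side are transposes of each other, which is exactly what makes the inner product $U_p^T\dot Y_p$ equal $\hat U_p^T\dot{\hat Y}_p$.
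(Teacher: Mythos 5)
Your proposal is correct and follows essentially the same route as the paper: sum the individual storage functions to get $\hat V_p=\sum_{i=1}^N V_{pi}(x_{pi})$, sum the dissipation inequalities to get $\dot{\hat V}_p\leq U_p^T\dot Y_p$, and use the adjointness of $\mathcal Q^T\otimes I_m$ and $\mathcal Q\otimes I_m$ to convert the supply rate to $\hat U_p^T\dot{\hat Y}_p$. Your version is marginally more careful in noting explicitly that the constant matrix commutes with time differentiation (the paper states the identity for $Y_p$ and applies it to $\dot Y_p$ implicitly), but the argument is the same.
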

\begin{proof}
According to Definition \ref{def:nonlinear NI}, each nonlinear NI system $H_{pi}$ ($i=1,2,\cdots,N$) must have a corresponding positive definite storage function $V_{pi}(x_{pi})$ such that $\dot V_{pi}(x_{pi})\leq u_{pi}^T\dot y_{pi}$, where $x_{pi}$ is the state of the system $H_{pi}$. We define the storage function for the system $\hat{\mathcal{H}}_p$ as $\hat V_p:=\sum_{i=1}^N V_{pi}(x_{pi})$, which is positive definite. Then
\begin{equation}\label{eq:V_p ineq}
    \dot{\hat V}_p=\sum_{i=1}^N \dot V_{pi}(x_{pi})\leq \sum_{i=1}^N u_{pi}^T\dot y_{pi}=U_p^T\dot Y_p.
\end{equation}
Let $\hat U_p$ and $\hat Y_p$ denote the input and output of the system $\hat{\mathcal{H}}_{p}$, respectively. According to the system setting in Fig.~\ref{fig:networked_plants}, we have
\begin{equation*}
U_p=(\mathcal Q^T\otimes I_m) \hat U_p,\quad \textnormal{and} \quad \hat Y_p=(\mathcal Q\otimes I_m) Y_p.
\end{equation*}
Therefore, we have
\begin{equation}\label{eq:H_p and hat H_p i-o relation}
U_p^TY_p=[(\mathcal Q^T\otimes I_m) \hat U_p)]^TY_p=\hat U_p^T(\mathcal Q\otimes I_m) Y_p=\hat U_p^T\hat Y_p.
\end{equation}
According to (\ref{eq:V_p ineq}) and (\ref{eq:H_p and hat H_p i-o relation}), we obtain the nonlinear NI inequality for the system $\hat {\mathcal H}_p$:
\begin{equation}\label{eq:hat V_p dot}
\dot{\hat V}_p \leq \hat U_p^T\dot{\hat Y}_p.	
\end{equation}
Therefore, $\hat{\mathcal{H}}_p$ is a nonlinear NI system.
\end{proof}

Now we give a definition of output feedback consensus for a network of systems as shown in Fig.~\ref{fig:H_p1_hetero}.
\begin{definition}
A distributed output feedback control law achieves output feedback consensus for a network of systems if $|y_{pi}(t) - y_{pj}(t)|\to 0$ as $t\to +\infty$, $\forall i,j\in{1,2,\cdots,N}$.
\end{definition}

Consider a series of heterogeneous nonlinear OSNI systems $H_{ck}$ $(k=1,2,\cdots,l)$ applied as controllers corresponding to the edges in the network. The OSNI controllers have the following state-space models:
\begin{align}
    \dot x_{ck}(t)=&\ f_{ck}(x_{ck}(t),u_{ck}(t));\label{eq:state ck}\\
    y_{ck}(t)=&\ h_{ck}(x_{ck}(t))\label{eq: output ck},
\end{align}
where $f_{ck}:\mathbb R^q\times \mathbb R^m \to \mathbb R^q$ is a Lipschitz continuous function and $h_{ck}:\mathbb R^q \to \mathbb R^m$ is a class $C^1$ function. They operate independently in parallel and each of them has its own input $u_{ck}\in \mathbb{R}^{m}$ and output $y_{ck}\in\mathbb{R}^{m}$ ($k=1,2,\cdots,l$), which is shown in Fig.~\ref{fig:H_c1_hetero}. We combine the inputs and outputs respectively as the vectors $U_c=\left[u_{c1}^T, u_{c2}^T, \cdots, u_{cl}^T\right]^T\in \mathbb R^{lm\times 1}$ and $Y_c=\left[y_{c1}^T,  y_{c2}^T, \cdots, y_{cl}^T\right]^T\in \mathbb R^{lm\times 1}$.
\begin{figure}[h!]
\centering
\psfrag{H_0}{$\mathcal{H}_c$}
\psfrag{u_p1}{$u_{c1}$}
\psfrag{y_p1}{$y_{c1}$}
\psfrag{u_p2}{$u_{c2}$}
\psfrag{y_p2}{$y_{c2}$}
\psfrag{u_pn}{$u_{cl}$}
\psfrag{y_pn}{$y_{cl}$}
\psfrag{H_p1}{\hspace{-0.05cm}$H_{c1}$}
\psfrag{H_p2}{\hspace{-0.05cm}$H_{c2}$}
\psfrag{H_pn}{\hspace{-0.05cm}$H_{cl}$}
\psfrag{ddd}{\hspace{0.13cm}$\vdots$}
\psfrag{udd}{\hspace{0.15cm}$\vdots$}
\psfrag{odd}{\hspace{0.1cm}$\vdots$}
\hspace{0.5cm}\vspace{-0.3cm}\includegraphics[width=7.5cm]{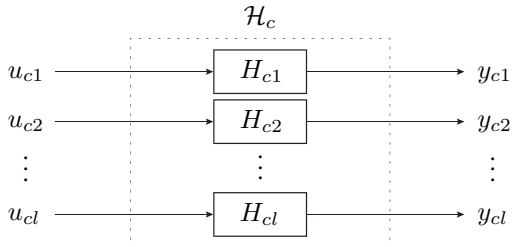}
\caption{System $\mathcal{H}_c$: a nonlinear system consisting of $l$ independent and heterogeneous nonlinear systems $H_{ck}$ $(k=1,2,\cdots,l)$, with independent inputs and outputs combined as the input and output of the networked system $\mathcal{H}_c$.}
\label{fig:H_c1_hetero}
\end{figure}

For the system $\mathcal{H}_c$ in Fig.~\ref{fig:H_c1_hetero}, we have the following lemma:
\begin{lemma}
If the systems $H_{ck}$ are nonlinear OSNI systems for all $k=1,2,\cdots,l$, then the system $\mathcal{H}_{c}$ is a nonlinear OSNI system.
\end{lemma}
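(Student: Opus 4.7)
The plan is to mirror the construction used in the previous lemma for $\hat{\mathcal{H}}_p$, with two adjustments: first, the system $\mathcal{H}_c$ has no incidence-matrix coupling (by comparing Fig.~\ref{fig:H_c1_hetero} with Fig.~\ref{fig:H_p1_hetero}, $\mathcal{H}_c$ is just the parallel stacking of the controllers with combined input $U_c$ and combined output $Y_c$), so no Kronecker identity is needed; second, the OSNI supply rate carries an extra $-\epsilon|\dot y|^2$ term that must be reassembled into a single quadratic in $\dot Y_c$.

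First I would invoke Definition~\ref{def:nonlinear OSNI} for each $H_{ck}$ to extract a positive definite storage function $V_{ck}(x_{ck})$ of class $C^1$ and a level of output strictness $\epsilon_k>0$ such that $\dot V_{ck}(x_{ck})\leq u_{ck}^T\dot y_{ck}-\epsilon_k|\dot y_{ck}|^2$. Then I would define the aggregate storage function $V_c:=\sum_{k=1}^l V_{ck}(x_{ck})$, which is positive definite in the joint state since each summand is.

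Next I would sum the individual dissipation inequalities along trajectories to obtain
\begin{equation*}
\dot V_c=\sum_{k=1}^l \dot V_{ck}(x_{ck})\leq \sum_{k=1}^l u_{ck}^T\dot y_{ck}-\sum_{k=1}^l \epsilon_k|\dot y_{ck}|^2=U_c^T\dot Y_c-\sum_{k=1}^l \epsilon_k|\dot y_{ck}|^2,
\end{equation*}
where the last equality uses the block-stacked definitions of $U_c$ and $Y_c$.

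The only step that is not entirely mechanical is collapsing the weighted sum $\sum_k \epsilon_k|\dot y_{ck}|^2$ into a single output-strictness term for the networked system. I would set $\epsilon:=\min_{1\leq k\leq l}\epsilon_k>0$ and use $|\dot Y_c|^2=\sum_{k=1}^l |\dot y_{ck}|^2$ to bound $\sum_{k=1}^l \epsilon_k|\dot y_{ck}|^2\geq \epsilon|\dot Y_c|^2$, yielding $\dot V_c\leq U_c^T\dot Y_c-\epsilon|\dot Y_c|^2$, which is exactly the OSNI inequality (\ref{eq:dissipativity of OSNI}) for $\mathcal{H}_c$. I expect no real obstacle beyond this choice of $\epsilon$; the construction is essentially parallel composition preserving dissipativity, with the mild observation that output strictness of the composite is governed by the weakest subsystem.
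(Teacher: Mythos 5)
Your proposal is correct and follows essentially the same route as the paper's own proof: sum the individual storage functions, sum the dissipation inequalities, and replace each $\epsilon_k$ by $\epsilon_{\min}=\min_k\epsilon_k$ using $|\dot Y_c|^2=\sum_{k=1}^l|\dot y_{ck}|^2$ to obtain the composite OSNI inequality. Your observation that $\mathcal{H}_c$ is a plain parallel stack with no incidence-matrix coupling is also consistent with the paper's setup.
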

\begin{proof}
For every nonlinear OSNI system $H_{ck}$, we have a positive definite storage function $V_{ck}(x_{ck})$ and a constant index $\epsilon_k>0$ such that
\begin{equation}\label{eq:OSNI ineq for controllers}
    \dot V_{ck}(x_{ck})\leq u_{ck}^T\dot y_{ck}-\epsilon_k |\dot y_{ck}|^2,
\end{equation}
where $\epsilon_k$ is the level of output strictness of the system $H_{ck}$. For the system $\mathcal{H}_c$, we define its storage function $V_c$ as $V_c := \sum_{k=1}^l V_{ck}(x_{ck})$, which is positive definite. The time derivative of $V_c$ satisfies the following inequality due to (\ref{eq:OSNI ineq for controllers}):
\begin{align}
    \dot V_c =& \sum_{k=1}^l \dot V_{ck}(x_{ck})
    \leq \sum_{k=1}^l u_{ck}^T\dot y_{ck}-\sum_{k=1}^l\epsilon_k |\dot y_{ck}|^2\notag\\
    \leq & \sum_{k=1}^l u_{ck}^T\dot y_{ck}-\sum_{k=1}^l\epsilon_{min} |\dot y_{ck}|^2\notag\\
         =&\ U_c^T \dot Y_c-\epsilon_{min} |\dot Y_c|^2.\label{eq:hat V_c ineq}
\end{align}
where $\epsilon_{min}=\min\{\epsilon_1,\epsilon_2,\cdots,\epsilon_l\}$. Hence, the system $\mathcal{H}_c$ satisfies the definition of a nonlinear OSNI system. This completes the proof.
\end{proof}

We assume that the following conditions are satisfied.

a) For each individual nonlinear OSNI controller $H_{ck}$ $(k=1,2,\cdots,l)$ with input $u_{ck}(t)$, state $x_{ck}(t)$ and output $y_{ck}(t)$ described by the state-space model (\ref{eq:state ck}), (\ref{eq: output ck}), suppose:

\begin{assumption}\label{assumption:A1}
	Over any time interval $[t_a,t_b]$ where $t_b>t_a$, $y_{ck}(t)$ remains constant if and only if $x_{ck}(t)$ remains constant; i.e., $\dot y_{ck}(t)\equiv 0\iff \dot x_{ck}(t)\equiv 0$. Moreover, $y_{ck}(t)\equiv 0 \iff  x_{ck}(t)\equiv 0$.
\end{assumption}

\begin{assumption}\label{assumption:A2}
	Over any time interval $[t_a,t_b]$ where $t_b>t_a$, $x_{ck}(t)$ remains constant only if $u_{ck}(t)$ remains constant; i.e., $x_{ck}(t)\equiv \bar x_{ck}\implies u_{ck}(t)\equiv \bar u_{ck}$. Moreover, $x_{ck}(t)\equiv 0 \implies u_{ck}(t)\equiv 0$.
\end{assumption}

\begin{figure}[h!]
\centering
\psfrag{H_p}{$\hat {\mathcal{H}}_p$}
\psfrag{U_p}{$\bar {\hat U}_{p}$}
\psfrag{Y_p}{\hspace{-0.1cm}$\hat Y_{p}(t)$}
\psfrag{H_nc}{$\mathcal H_c$}
\psfrag{U_nc}{\hspace{-0.2cm}$U_c(t)$}
\psfrag{Y_nc}{\hspace{0.1cm}$\bar Y_c$}
\includegraphics[width=8cm]{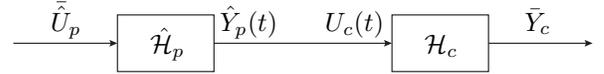}
\caption{Open-loop interconnection of the networked nonlinear NI plants $\hat{\mathcal{H}}_p$ and the networked nonlinear OSNI controllers $\mathcal{H}_c$.}
\label{fig:open_network}
\end{figure}

b) For the open-loop interconnection of the systems $\hat{\mathcal{H}}_p$ and $\mathcal{H}_c$ shown in Fig.~\ref{fig:open_network}, suppose:

\begin{assumption}\label{assumption:A3}
	Given any constant input $\hat U_p(t)\equiv \bar {\hat U}_p$ for the system $\hat{\mathcal H}_p$, we obtain a corresponding output $\hat Y_p(t)$, which is not necessarily constant. Given $\hat Y_p(t)$ as input $U_c(t)$ to the system $\mathcal{H}_c$, if the corresponding output of the system $\mathcal{H}_c$ is a constant $Y_c(t)\equiv \bar Y_c$, then there exists a constant $\gamma\in (0,1)$ such that $\bar {\hat U}_p$ and $\bar Y_c$ satisfy
\begin{equation}\label{eq:assumption 3 ineq}
    \bar {\hat U}_p^T\bar Y_c \leq \gamma \left|\bar {\hat U}_p\right|^2.
\end{equation}
\end{assumption}

Now consider the closed-loop interconnection of the networked plants shown in Fig.~\ref{fig:networked_plants} and the networked controllers shown in Fig.~\ref{fig:H_c1_hetero} in positive feedback, which is depicted in Fig.~\ref{fig:closed_network}. In this paper, the robust output consensus of heterogeneous nonlinear NI plants is achieved by constructing a control system with the block diagram shown in Fig.~\ref{fig:closed_network} and choosing suitable controllers that satisfy certain conditions. The connections between the plants and controllers can be better visualised from the undirected graph, as shown in the example in Fig.~\ref{fig:5_nodes}.

The nodes $p_i$ ($i=1,\cdots,5$ in this example) represent the heterogeneous nonlinear NI plants, while the heterogeneous nonlinear OSNI controllers $c_k$ ($k=1,\cdots,5$ in this example) correspond to the edges. Given any directed graph corresponding to the graph in Fig.~\ref{fig:5_nodes} with the incidence matrix $\mathcal Q$, each edge will have a direction. Then the corresponding connection between the plants and the controller is as shown in Fig.~\ref{fig:Edge_connection}. The controller takes the difference between the outputs of the plants as its input and feeds back its output to the plants with a positive or negative sign corresponding to the edge direction. Each plant takes the sum of all the outputs of the controllers connected to it with correct signs as its input.

\begin{figure}[h!]
\centering
\psfrag{H_p}{$\mathcal{H}_p$}
\psfrag{H_c}{$\mathcal{H}_c$}
\psfrag{H_p1}{$H_{p1}$}
\psfrag{H_p2}{$H_{p2}$}
\psfrag{H_pn}{$H_{pN}$}
\psfrag{H_c1}{$H_{c1}$}
\psfrag{H_c2}{$H_{c2}$}
\psfrag{H_cl}{$H_{cl}$}
\psfrag{U_p}{$U_{p}$}
\psfrag{Y_p}{$Y_{p}$}
\psfrag{U_c}{$U_{c}$}
\psfrag{Y_c}{$Y_{c}$}
\psfrag{ddots}{$\ddots$}
\psfrag{H_np}{$\hat {\mathcal H}_p$}
\psfrag{U_np}{\hspace{0.1cm}$\hat U_p$}
\psfrag{Y_np}{\hspace{0.1cm}$\hat Y_p$}
\psfrag{Q}{\hspace{-0.24cm}$\mathcal{Q}^T\otimes I_m$}
\psfrag{Q_t}{$\mathcal{Q}\otimes I_m$}
\includegraphics[width=8.5cm]{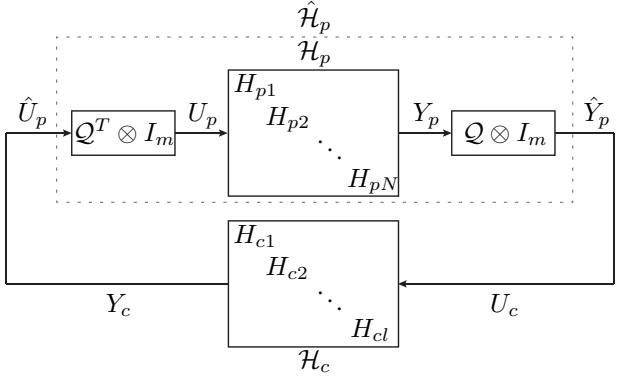}
\caption{Positive feedback interconnection of heterogeneous nonlinear NI plants and nonlinear OSNI controllers according to the directed graph network topology.}
\label{fig:closed_network}
\end{figure}
\begin{figure}[h!]
\centering
\psfrag{p_1}{\large$p_1$}
\psfrag{p_2}{\large$p_2$}
\psfrag{p_3}{\large$p_3$}
\psfrag{p_4}{\large$p_4$}
\psfrag{p_5}{\large$p_5$}
\psfrag{c_1}{\large$c_1$}
\psfrag{c_2}{\large$c_2$}
\psfrag{c_3}{\large$c_3$}
\psfrag{c_4}{\large$c_4$}
\psfrag{c_5}{\large$c_5$}
\includegraphics[width=6cm]{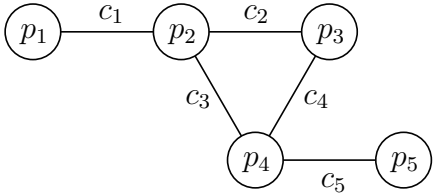}
\caption{An example of the networked connection of plants and controllers.}
\label{fig:5_nodes}
\end{figure}

\begin{figure}[h!]
\centering
\psfrag{p_1}{\large$p_1$}
\psfrag{p_2}{\large$p_2$}
\psfrag{c_1}{\large$c_1$}
\psfrag{+}{\scriptsize$+$}
\psfrag{-}{\scriptsize$-$}
\psfrag{d_1}{$d_1=0$}
\psfrag{d_2}{$d_2=0$}
\psfrag{y_1}{$y_{p1}$}
\psfrag{y_2}{$y_{p2}$}
\psfrag{y_c}{\hspace{-0.15cm}$y_{c1}=H_{c1}(y_{p1}-y_{p2})$}
\vspace{-0.3cm}\includegraphics[width=7.5cm]{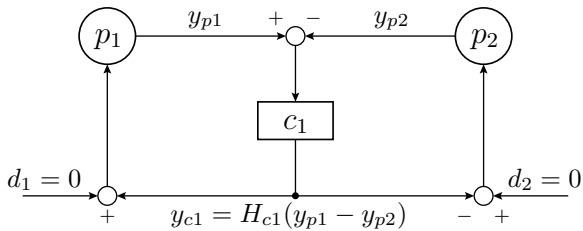}
\caption{Detailed block diagram corresponding to a pair of nodes connected by an edge.}
\label{fig:Edge_connection}
\end{figure}

\begin{theorem}\label{theorem:consensus}
Consider an undirected connected graph $\mathcal G$ that models the communication links for a network of heterogeneous nonlinear NI systems $H_{pi}$ $(i=1,2,\cdots,N)$ as shown in Fig.~\ref{fig:H_p1_hetero}, and any directed graph corresponding to $\mathcal G$ with the incidence matrix $\mathcal Q$. Also, consider the heterogeneous nonlinear OSNI control laws $H_{ck}$ $(k=1,2,\cdots,l)$ for all the edges. Suppose Assumptions \ref{assumption:A1}, \ref{assumption:A2} and \ref{assumption:A3} are satisfied and the storage function, defined as
\begin{equation*}
    W:=\hat V_p+V_c-\hat Y_p^TY_c\ ,
\end{equation*}
is positive definite, where $\hat V_p$ and $V_c$ are positive definite storage functions that satisfy (\ref{eq:V_p ineq}) for the system $\hat{\mathcal{H}}_p$ and (\ref{eq:hat V_c ineq}) for the system $\mathcal{H}_c$, respectively. Here, $\hat Y_p$ and $Y_c$ are outputs of the systems $\hat {\mathcal{H}}_p$ and $\mathcal{H}_c$, respectively. Then the robust output feedback consensus can be achieved via the protocol
\begin{equation*}
    U_p = (\mathcal{Q}^T\otimes I_m)\mathcal{H}_c\left((\mathcal{Q}\otimes I_m)Y_p\right),
\end{equation*}
or equivalently,
\begin{equation*}
    u_{pi}=\sum_{k=1}^l q_{ki}H_{ck}\left(\sum_{j=1}^N q_{kj}y_{pj}\right),
\end{equation*}
for each plant $pi$, as shown in Fig.~\ref{fig:closed_network}, where $\sum_{j=1}^N q_{kj}y_{pj}$ represents the difference between the outputs of the plants connected by the edge $e_k$.
\end{theorem}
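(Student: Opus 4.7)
The plan is to use $W$ as a Lyapunov function for the autonomous closed-loop system and to apply LaSalle's invariance principle, exploiting the positive-feedback algebra $\hat U_p = Y_c$, $U_c = \hat Y_p$ imposed by Fig.~\ref{fig:closed_network}. First I would differentiate $W$ along closed-loop trajectories, applying the nonlinear NI inequality (\ref{eq:hat V_p dot}) for $\hat{\mathcal H}_p$ and the nonlinear OSNI inequality (\ref{eq:hat V_c ineq}) for $\mathcal H_c$:
\begin{align*}
\dot W &= \dot{\hat V}_p + \dot V_c - \dot{\hat Y}_p^T Y_c - \hat Y_p^T \dot Y_c \\
&\leq \hat U_p^T \dot{\hat Y}_p + U_c^T \dot Y_c - \epsilon_{\min}|\dot Y_c|^2 - \dot{\hat Y}_p^T Y_c - \hat Y_p^T \dot Y_c \\
&= -\epsilon_{\min}|\dot Y_c|^2 \leq 0,
\end{align*}
where the cross terms cancel after substituting $\hat U_p = Y_c$ and $U_c = \hat Y_p$. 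Since $W$ is positive definite and nonincreasing, trajectories of the autonomous closed-loop system are confined to sublevel sets of $W$.

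Next, I would invoke LaSalle's invariance principle: the $\omega$-limit of every trajectory lies in the largest invariant subset $\mathcal M$ of $\{\dot W = 0\} = \{\dot Y_c \equiv 0\}$. On $\mathcal M$, each $y_{ck}$ is constant; Assumption~\ref{assumption:A1} then forces each $x_{ck}$ constant, and Assumption~\ref{assumption:A2} forces each $u_{ck}$, hence $U_c = \hat Y_p$, constant. With $\hat U_p = Y_c$ also constant, the hypothesis of Assumption~\ref{assumption:A3} is met and yields
\[
|\bar Y_c|^2 \;=\; \bar{\hat U}_p^T \bar Y_c \;\leq\; \gamma\,|\bar{\hat U}_p|^2 \;=\; \gamma\,|\bar Y_c|^2,
\]
so $\gamma \in (0,1)$ forces $\bar Y_c = 0$. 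Propagating $Y_c \equiv 0$ back through Assumptions~\ref{assumption:A1} and \ref{assumption:A2} gives $x_{ck} = 0$ and $u_{ck} = 0$, hence $\hat Y_p = U_c = 0$, i.e., $(\mathcal Q \otimes I_m)Y_p = 0$ on $\mathcal M$.

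Finally, since $\mathcal G$ is connected, $\ker \mathcal Q = \operatorname{span}\{\mathbf{1}_N\}$, so $(\mathcal Q \otimes I_m)Y_p = 0$ forces $y_{p1} = \cdots = y_{pN}$ throughout $\mathcal M$. Convergence of state trajectories to $\mathcal M$ together with continuity of the output maps $h_{pi}$ then gives $|y_{pi}(t) - y_{pj}(t)| \to 0$ as $t \to +\infty$, which is the claimed output feedback consensus; the robustness follows from the fact that no specific parameters of the $H_{pi}$ were used. The main obstacle is the LaSalle step: one needs sublevel sets of $W$ to be compact (a mild strengthening of positive-definiteness, typically obtained through radial unboundedness) so that $\omega$-limit sets are nonempty and contained in $\mathcal M$, and one must carefully sequence the implications from Assumptions~\ref{assumption:A1}--\ref{assumption:A3} so that Assumption~\ref{assumption:A3} is invoked with the correct identification $\bar{\hat U}_p = \bar Y_c$ to trigger the strict contraction $\gamma < 1$.
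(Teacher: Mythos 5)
Your proposal is correct and follows essentially the same route as the paper's own proof: differentiate $W$, use the NI/OSNI dissipation inequalities and the feedback identities $\hat U_p = Y_c$, $U_c = \hat Y_p$ to get $\dot W \leq -\epsilon_{\min}|\dot Y_c|^2$, then apply LaSalle together with Assumptions~\ref{assumption:A1}--\ref{assumption:A3} to force $\bar Y_c = 0$ and hence $(\mathcal Q\otimes I_m)Y_p = 0$. If anything, you are slightly more explicit than the paper on two points it glosses over --- the final step that $\ker\mathcal Q = \operatorname{span}\{\mathbf 1_N\}$ for a connected graph yields $y_{p1}=\cdots=y_{pN}$, and the need for compact sublevel sets of $W$ to legitimize the LaSalle argument --- both of which are worthwhile additions.
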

\begin{proof}
We apply the Lyapunov's direct method and take the time derivative of the storage function $W$. According to (\ref{eq:V_p ineq}) and (\ref{eq:hat V_c ineq}), we have
\begin{align}
    \dot W=&\ \dot {\hat V}_p+\dot V_c-\dot{\hat Y}_p^TY_c - \hat Y_p^T\dot{Y}_c\notag\\
    =&\ \dot{\hat V}_p + \dot V_c-\hat U_p^T\dot {\hat Y}_p- U_c^T \dot Y_c\notag\\
    \leq &-\epsilon_{min} \left|\dot Y_c\right|^2\leq \ 0.\label{eq:W dot}
\end{align}
Hence, the closed-loop system is at least Lyapunov stable. Now we apply LaSalle's invariance principle. According to (\ref{eq:W dot}), $\dot{W}$ can remain zero only if $\epsilon_{min} |\dot Y_{c}|^2$ remains zero, which means $\dot y_{ck}(t)$ remain zero for all $k=1,2,\cdots,l$. According to Assumptions \ref{assumption:A1} and \ref{assumption:A2}, for the system $H_{ck}$, $\dot y_{ck}(t)\equiv 0$ implies $\dot x_{ck}(t)\equiv 0$, which holds only if $u_{ck}(t)\equiv \bar u_{ck}$. In other words, for all $k=1,2,\cdots,l$, the controllers $H_{ck}$ are in steady-state; i.e., $u_{ck}(t)\equiv \bar u_{ck}$, $x_{ck}(t)\equiv \bar x_{ck}$ and $y_{ck}(t)\equiv \bar y_{ck}$. Consider the setting in Fig.~\ref{fig:closed_network}, in which $U_c(t)$ and $Y_c(t)$ are all constant vectors; i.e., $U_c(t)\equiv \bar U_c$ and $Y_c(t)\equiv \bar Y_c$. According to the closed-loop setting that $\hat U_p(t)\equiv Y_c(t)$, $\hat U_{p}(t)\equiv \bar {\hat U}_p$ is also a constant vector. The inequality (\ref{eq:assumption 3 ineq}) implies
\begin{equation*}
    \bar {\hat U}_p^T\bar{ Y}_c=\left|\bar {\hat U}_p\right|^2\leq \gamma \left|\bar {\hat U}_p\right|^2.
\end{equation*}
This can only hold if $\bar {\hat U}_p=0$, which implies $\bar{Y}_c=0$. Hence, $\bar y_{ck}=0$ for all $k=1,2,\cdots,l$. According to Assumptions \ref{assumption:A1} and \ref{assumption:A2}, $\bar y_{ck}=0$ implies $\bar x_{ck}=0$ and then $\bar u_{ck}=0$ for all $k=1,2,\cdots,l$. This implies $y_{pi}(t)\equiv y_{pj}(t)$ for all $(v_i,v_j)\in \mathcal{E}$, which means output consensus is achieved for all of the heterogeneous nonlinear NI plants. Otherwise, $\dot W$ cannot remain at zero and $W$ will keep decreasing until output consensus is achieved or $W=0$, which also implies output consensus. This completes the proof.
\end{proof}
\begin{remark}
The protocol in Theorem \ref{theorem:consensus} is robust with respect to small parameter perturbation for heterogeneous nonlinear NI plants connected in a network. Indeed, for physical systems, NI property is invariant to parameter variations so that consensus can be always guaranteed as long as the conditions required in Theorem \ref{theorem:consensus} are satisfied.
\end{remark}

We now provide some typical first-order and second-order dynamical systems as possible choices for nonlinear OSNI controllers.
\begin{lemma}\label{lemma:1st order OSNI system}
Consider a first-order system with the state-space model:
	\begin{align*}
		\dot x(t) =&\ \rho(x(t))+\alpha u(t);\\
		y(t) =&\ x(t)
	\end{align*}
where $x(t)$, $u(t)$ and $y(t)$ are scalar functions of time, $\rho:\mathbb R\to \mathbb R$ is a Lipschitz continuous function and $\alpha > 0$ is a constant. If the function $V$, given by $V(x)=-\frac{1}{\alpha}\int_0^x \rho(z)dz$ is positive definite, then the system is nonlinear OSNI with level of strictness $\epsilon\in(0,\frac{1}{\alpha}]$ and with $V$ being a storage function.
\end{lemma}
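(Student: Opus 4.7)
The plan is to verify the OSNI dissipation inequality $\dot V(x) \leq u\dot y - \epsilon |\dot y|^2$ directly, exploiting the fact that $y = x$ makes $\dot y = \dot x$, and that the special choice of $V$ has a gradient proportional to $\rho(x)$.

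First I would compute the time derivative of the candidate storage function along trajectories. By the fundamental theorem of calculus, $V'(x) = -\frac{1}{\alpha}\rho(x)$, and since $\rho$ is Lipschitz, $V$ is $C^1$. Together with $\dot y = \dot x = \rho(x) + \alpha u$, this gives $\dot V = V'(x)\dot x = -\frac{1}{\alpha}\rho(x)\dot y$. The positive definiteness of $V$ is assumed in the hypothesis, so nothing further is needed to qualify $V$ as a valid storage function.

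Next I would use the state equation to eliminate $u$ in favour of $\dot y$ and $\rho(x)$ via the identity $u = \frac{1}{\alpha}(\dot y - \rho(x))$. Substituting into the supply rate yields
\begin{equation*}
u\dot y - \epsilon|\dot y|^2 = \tfrac{1}{\alpha}\dot y^2 - \tfrac{1}{\alpha}\rho(x)\dot y - \epsilon \dot y^2 = \dot V + \Bigl(\tfrac{1}{\alpha}-\epsilon\Bigr)\dot y^2.
\end{equation*}
Thus the OSNI inequality $\dot V \leq u\dot y - \epsilon|\dot y|^2$ is equivalent to $(\frac{1}{\alpha}-\epsilon)\dot y^2 \geq 0$, which holds for every trajectory precisely when $\epsilon \leq \frac{1}{\alpha}$. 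Combined with the requirement $\epsilon > 0$ in Definition~\ref{def:nonlinear OSNI}, this yields the stated admissible range $\epsilon \in (0,\frac{1}{\alpha}]$.

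There is no serious obstacle here: the whole argument is an algebraic manipulation, and the only subtlety is recognising that the factor $-\frac{1}{\alpha}\rho(x)$ appearing in $\dot V$ cancels exactly against the $-\frac{1}{\alpha}\rho(x)\dot y$ term coming from the substitution of $u$, leaving a clean quadratic in $\dot y$. I would present the proof by (i) computing $\dot V$, (ii) rewriting $u$ in terms of $\dot y$, and (iii) reading off the inequality, noting at the end that the extreme value $\epsilon = \frac{1}{\alpha}$ is achieved.
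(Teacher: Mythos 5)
Your proposal is correct and is essentially the paper's own argument: the paper likewise forms $\dot V - (u\dot y - \epsilon\dot y^2)$ and shows it equals $(\epsilon - \tfrac{1}{\alpha})(\rho(x)+\alpha u)^2 = (\epsilon-\tfrac{1}{\alpha})\dot y^2 \leq 0$ for $\epsilon\in(0,\tfrac{1}{\alpha}]$, which is the same cancellation you identify, merely written in terms of $\rho(x)+\alpha u$ instead of $\dot y$.
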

\begin{proof}
	Let us define $D(x)=\dot V(x)-\left(u\dot y-\epsilon \dot y^2\right)$. We prove in the following that $D(x)\leq 0$ for $\epsilon\in (0,\frac{1}{\alpha}]$.
	\begin{equation*}
	\begin{aligned}
		D(x)=&\ \dot V(x)-\left(u\dot y-\epsilon \dot y^2\right)\\
		=&\ \frac{\partial V(x)}{\partial x}\dot x-u\dot x+\epsilon \dot x^2\\
		=&\ \dot x \left[\frac{\partial V(x)}{\partial x}-u+\epsilon \dot x\right]\\
		=&\left(\rho(x)+\alpha u\right)\left[-\frac{1}{\alpha}\rho(x)-u+\epsilon\left(\rho(x)+\alpha u\right)\right]\\
		=&\left(\epsilon-\frac{1}{\alpha}\right)\left(\rho(x)+\alpha u\right)^2\leq \ 0
	\end{aligned}
	\end{equation*}
when $\epsilon\in \left(0,\frac{1}{\alpha}\right]$. Therefore, this system is a nonlinear OSNI system according to Definition \ref{def:nonlinear OSNI}.
\end{proof}
\begin{lemma}\label{lemma:2nd order OSNI system}
Consider a second order system with the following state-space model:
	\begin{align*}
		\left[\begin{matrix}\dot x_1(t)\\ \dot x_2(t)\end{matrix}\right] =& \left[\begin{matrix}x_2(t)\\ \eta(x_1(t))-\beta x_2(t)+\alpha u(t)\end{matrix}\right];\\
		y(t) =&\ x_1(t)
	\end{align*}
where $x_1(t)$, $x_2(t)$, $u(t)$ and $y(t)$ are scalar functions of time, $\eta:\mathbb R\to \mathbb R$ is a Lipschitz continuous function and $\alpha>0$ and $\beta>0$ are constants. If the function $V$, given by $V(x_1,x_2)=-\frac{1}{\alpha}\int_0^{x_1} \eta(z)dz+\frac{1}{2\alpha}x_2^2$ is positive definite, then the system is nonlinear OSNI with level of strictness $\epsilon\in(0,\frac{\beta}{\alpha}]$ and with $V$ being a storage function.  
\end{lemma}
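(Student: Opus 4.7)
The plan is to mirror the proof strategy of Lemma \ref{lemma:1st order OSNI system}: define the defect quantity $D(x_1,x_2) := \dot V(x_1,x_2) - \left(u\dot y - \epsilon \dot y^2\right)$ and verify directly that $D \leq 0$ for every admissible $\epsilon \in (0,\beta/\alpha]$, after which Definition \ref{def:nonlinear OSNI} gives the conclusion immediately (positive definiteness of $V$ being assumed in the hypothesis).

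First I would compute $\dot V$ along trajectories via the chain rule, using $\frac{\partial V}{\partial x_1} = -\frac{1}{\alpha}\eta(x_1)$ and $\frac{\partial V}{\partial x_2} = \frac{1}{\alpha}x_2$, together with the state equations $\dot x_1 = x_2$ and $\dot x_2 = \eta(x_1)-\beta x_2+\alpha u$. The key observation, and the one piece of structure worth highlighting in the writeup, is that the two $\eta(x_1)x_2$ contributions cancel exactly, leaving the clean expression $\dot V = -\frac{\beta}{\alpha}x_2^2 + u x_2$. Since $y = x_1$, we also have $\dot y = x_2$, so $u\dot y - \epsilon \dot y^2 = u x_2 - \epsilon x_2^2$.

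Substituting these into the definition of $D$ yields
\begin{equation*}
    D(x_1,x_2) = -\frac{\beta}{\alpha}x_2^2 + u x_2 - u x_2 + \epsilon x_2^2 = \left(\epsilon - \frac{\beta}{\alpha}\right) x_2^2,
\end{equation*}
which is non-positive precisely on the stated range $\epsilon \in (0,\beta/\alpha]$. Combined with the positive definiteness of $V$ that is assumed in the hypothesis, the inequality $\dot V \leq u\dot y - \epsilon \dot y^2$ is exactly the nonlinear OSNI dissipation inequality (\ref{eq:dissipativity of OSNI}), finishing the proof.

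There is no real obstacle here; the only thing to be careful about is the sign bookkeeping in $\partial V/\partial x_1$, because the integrand $\eta(z)$ enters $V$ with a minus sign. Writing the calculation in the same stepwise $D(x)$ format as Lemma \ref{lemma:1st order OSNI system} keeps the argument transparent and makes the cancellation of the nonlinear term $\eta(x_1)$ visible, which is what allows the tight bound $\epsilon \leq \beta/\alpha$ (independent of $\eta$) to emerge.
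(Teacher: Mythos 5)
Your proposal is correct and follows essentially the same route as the paper: both define $D(x_1,x_2)=\dot V-(u\dot y-\epsilon\dot y^2)$, exploit the exact cancellation of the $\eta(x_1)x_2$ terms, and reduce $D$ to $\left(\epsilon-\frac{\beta}{\alpha}\right)x_2^2\leq 0$ for $\epsilon\in\left(0,\frac{\beta}{\alpha}\right]$. No gaps.
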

\begin{proof}
	Define $D(x_1,x_2)=\dot V(x_1,x_2)-\left(u\dot y-\epsilon \dot y^2\right)$. We prove in the following that $D(x_1,x_2)\leq 0$ for $\epsilon\in (0,\frac{\beta}{\alpha}]$.
	\begin{equation*}
	\begin{aligned}
		D(x_1,x_2)=&\ \dot V(x_1,x_2)-\left(u\dot y-\epsilon \dot y^2\right)\\
		=&\ \frac{\partial V(x_1,x_2)}{\partial x_1}\dot x_1+\frac{\partial V(x_1,x_2)}{\partial x_2}\dot x_2-u\dot x_1+\epsilon \dot x_1^2\\
		=& -\frac{1}{\alpha}\eta (x_1)x_2+\frac{1}{\alpha}x_2\left[\eta(x_1)-\beta x_2+\alpha u\right]\\
		&-ux_2+\epsilon x_2^2\\
		=&\ \left(\epsilon-\frac{\beta}{\alpha}\right)x_2^2\leq \ 0
		\end{aligned}
	\end{equation*}
when $\epsilon\in \left(0,\frac{\beta}{\alpha}\right]$. Hence, this system is a nonlinear OSNI system according to Definition \ref{def:nonlinear OSNI}.
\end{proof}

\section{Example}
This section illustrates the robust output feedback consensus protocol described in Theorem \ref{theorem:consensus} with an example of networked heterogeneous pendulum systems.

Consider three pendulum systems connected by an undirected connected graph $\mathcal G$ as shown in Fig.~\ref{fig:3_nodes}.
\begin{figure}[h!]
\centering
\psfrag{1}{\hspace{-0.02cm}\large$1$}
\psfrag{2}{\hspace{0.0cm}\large$2$}
\psfrag{3}{\hspace{-0.005cm}\large$3$}
\psfrag{e_1}{$e_1$}
\psfrag{e_2}{$e_2$}
\includegraphics[width=4cm]{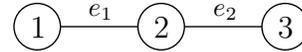}
\caption{An undirected and connected graph consisting of three nodes.}
\label{fig:3_nodes}
\end{figure}
The Laplacian matrix of graph $\mathcal G$ is $\mathcal L_3=\left[\small \begin{matrix}1&-1&0\\-1&2&-1\\0&-1&1\end{matrix}
\right]$. To obtain a directed graph corresponding to $\mathcal G$, we can arbitrarily decide the direction of each edge. If we decide the directions of the edges as $e_1=(v_1,v_2)$ and $e_2=(v_2,v_3)$, then the incidence matrix of the directed graph corresponding to $\mathcal G$ is $\mathcal Q = \left[\small\begin{matrix}1&-1&0\\0&1&-1 \end{matrix}\right]$.

These pendulum systems have the following state-space model:
\begin{equation*}
\begin{aligned}
    \left[\begin{matrix}\dot x_1\\\dot x_2\end{matrix}\right]=&\left[\begin{matrix} x_2\\ \frac{1}{ml^2}\left(-\kappa x_1-mgl\sin {x_1}+u_1\right)\end{matrix}\right];\\
    y_1=&\ x_1
\end{aligned}
\end{equation*}
where $m$ is the mass of each bob, $l$ is the length of each rod, $\kappa$ is the spring constant of a torsional spring installed in each pivot and $g\approx 9.8m/s^2$ is the gravitational acceleration. $m$, $l$ and $\kappa$ are different for the three heterogeneous networked pendulums. For each pendulum system, the system state $x_1$ is the counterclockwise angular displacement from the vertically downward position and $x_2$ is the system angular velocity. The system input $u$ is an external torsional force in the counterclockwise direction, and $y$ is the system output. The system is a nonlinear NI system with the storage function $V_1(x_1,x_2)=\frac{1}{2}\kappa x_1^2+\frac{1}{2}ml^2x_2^2+mgl(1-\cos{x_1})$.

According to Lemma \ref{lemma:1st order OSNI system}, we choose the following nonlinear OSNI system as the control law corresponding to each edge:
\begin{equation*}
\begin{aligned}
    \dot x_c=&-\beta x_c-\phi x_c^3+\alpha u_c;\\
    y_c=&\ x_c
\end{aligned}
\end{equation*}
where $\beta>0$, $\phi>0$ and $\alpha>0$ are constants. The nonlinear OSNI property of this system can be proved with the storage function $V_c(x_c)=\frac{\beta}{2\alpha}x_c^2+\frac{\phi}{4\alpha}x_c^4$.

Suppose the pendulums have the following parameters:
\begin{align*}
	\textnormal{pendulum 1:}&\ m_1=1kg,\hspace{2.75mm} l_1=0.5m\ \textnormal{and}\ \kappa_1=3Nm/rad;\\
	\textnormal{pendulum 2:}&\ m_2=1.5kg, l_2=0.3m\ \textnormal{and}\ \kappa_2=5Nm/rad;\\
	\textnormal{pendulum 3:}&\ m_3=0.5kg, l_3=0.8m\ \textnormal{and}\ \kappa_3=6Nm/rad.
\end{align*}

The parameters for the controllers are chosen to be:
\begin{align*}
	\textnormal{controller 1:}&\ \beta_1=10,\ \phi_1=15\ \textnormal{and}\ \alpha_1=20;\\
	\textnormal{controller 2:}&\ \beta_2=20,\ \phi_2=5\ \hspace{1.7mm}\textnormal{and}\ \alpha_2=30.
\end{align*}

We choose different parameters for the plants and the controllers to demonstrate that our protocal allows both the plants and the controllers to be heterogeneous systems. According to Theorem~\ref{theorem:consensus}, we control the pendulums with the distributed control law: $u_{p1}=H_{c1}(y_{p1}-y_{p2})$, $u_{p2}=-H_{c1}(y_{p1}-y_{p2})+H_{c2}(y_{p2}-y_{p3})$ and $u_{p3}=-H_{c2}(y_{p2}-y_{p3})$, respectively. Here, $H_{ck}(\cdot)$ represents the output of controller $c_k$. It can be verified that Assumptions \ref{assumption:A1}, \ref{assumption:A2} and \ref{assumption:A3} are satisfied and the storage function of the entire networked system is positive definite. As shown in Fig.~\ref{fig:example_figure}, the pendulum systems approach the same limit trajectory under the effect of the heterogeneous nonlinear OSNI controllers.

\begin{figure}[h!]
\centering
\psfrag{Output}{Output}
\psfrag{time (s)}{Time (s)}
\psfrag{Output Feedback consensus of pendulums}{\hspace{0.1cm}Output Feedback Consensus of Pendulums}
\psfrag{Pendulum sys 1}{\small Pendulum 1}
\psfrag{Pendulum sys 2}{\small Pendulum 2}
\psfrag{Pendulum sys 3}{\small Pendulum 3}
\psfrag{-1}{-1}
\psfrag{-0.5}{-0.5}
\psfrag{0}{0}
\psfrag{0.5}{0.5}
\psfrag{1}{1}
\psfrag{5}{5}
\psfrag{10}{10}
\psfrag{15}{15}
\psfrag{20}{20}
\psfrag{25}{25}
\psfrag{30}{30}
\includegraphics[width=9cm]{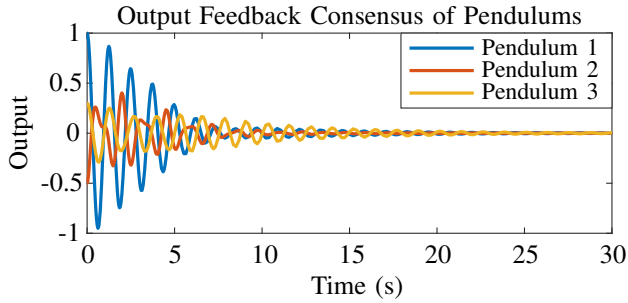}
\caption{Robust output feedback consensus for a network of heterogeneous pendulum systems with heterogeneous nonlinear OSNI controllers applied.}
\label{fig:example_figure}
\end{figure}

\section{Conclusion}
This paper provides a protocol for the output feedback consensus problem of heterogeneous nonlinear NI systems. For a network of heterogeneous nonlinear NI systems connected by an undirected and connected graph, heterogeneous edge-based nonlinear OSNI controllers can be applied in positive feedback through a network topology leading to convergence of the outputs of the nonlinear NI plants to a common limit trajectory if certain conditions are satisfied. This protocol is robust with respect to parameter perturbation in the system models of the nonlinear NI plants and the nonlinear OSNI controllers so that any network of heterogeneous nonlinear NI systems can be synchronised as long as their nonlinear NI property is preserved and certain conditions are satisfied. Some typical first-order and second-order nonlinear systems are also provided as possible choices for nonlinear OSNI controllers.

\bibliographystyle{IEEEtran}

\end{document}